\documentclass[11pt,a4paper]{article}

\usepackage[utf8]{inputenc} 
\usepackage[T1]{fontenc}    
\usepackage{lmodern}        

\usepackage{amsmath, amssymb, amsthm} 
\usepackage{graphicx}       
\usepackage[margin=1in]{geometry} 
\usepackage[numbers]{natbib} 
\usepackage{enumitem}       
\usepackage{listings}       
\lstset{basicstyle=\ttfamily\small, breaklines=true}
\usepackage{textcomp}       
\usepackage{booktabs}       

\sloppy
\hyphenation{inter-action infor-mation multi-variate}

\newtheorem{theorem}{Theorem}[section]
\newtheorem{lemma}[theorem]{Lemma}

\theoremstyle{definition}
\newtheorem{definition}[theorem]{Definition}
\newtheorem{assumption}[theorem]{Assumption}
\theoremstyle{remark}
\newtheorem{remark}[theorem]{Remark}

\newcommand{\KL}{\ensuremath{\mathrm{KL}}} 

\title{\textbf{A Hierarchical Decomposition of Kullback-Leibler Divergence: Disentangling Marginal Mismatches from Statistical Dependencies}} 
\author{
    William Cook
}
\date{\today} 

\begin{document}
\maketitle

\begin{abstract}
The Kullback-Leibler (KL) divergence is a foundational measure for comparing probability distributions, yet in multivariate settings, its structure is often opaque—conflating marginal mismatches and statistical dependencies. We derive an algebraically exact, additive, and hierarchical decomposition of the KL divergence ($\KL(P_k \| Q^{(\otimes k)})$) between a joint distribution ($P_k$) and a product reference ($Q^{(\otimes k)}$). The total divergence splits into the sum of marginal KLs, ($\sum_{i=1}^k \KL(P_i \| Q)$), and the total correlation ($C(P_k)$), which we further decompose as ($C(P_k) = \sum_{r=2}^k I^{(r)}(P_k)$) using Möbius inversion on the subset lattice. Each ($I^{(r)}$) quantifies the distinct contribution of ($r$)-way statistical interactions to the total divergence. This yields a decomposition of this form that is both algebraically complete and interpretable using only standard Shannon quantities, with no approximations or model assumptions. Numerical validation using hypergeometric sampling confirms exactness to machine precision across diverse system configurations. This framework enables precise diagnosis of divergence origins—marginal vs. interaction—across applications in machine learning, econometrics, and complex systems.
\end{abstract}

\section{Introduction}
\label{sec:intro}

Understanding why high-dimensional probability distributions differ—beyond merely quantifying how much—is a central challenge in statistics and machine learning. The Kullback-Leibler (KL) divergence, or relative entropy \cite{CoverThomas2006}, provides a foundational measure of the distinction between a `true' distribution $P$ and a model or reference distribution $Q$. Its applications are widespread, spanning machine learning (e.g., variational inference, model selection) \cite{Bishop2006}, statistical physics, and computational biology \cite{Presse2013}.

However, in multivariate systems described by a joint distribution $P_k(X_1, \dots, X_k)$ compared to a simpler reference such as a product distribution $Q^{(\otimes k)} = \prod_{i=1}^k Q(X_i)$, where each factor is identical to a fixed reference distribution $Q$, the total KL divergence $\KL(P_k \| Q^{(\otimes k)})$ yields a single scalar value. This aggregate measure conflates two distinct sources of divergence: mismatches between each individual marginal distribution $P_i$ and the reference $Q$, and statistical dependencies (interactions) among the variables $X_1, \dots, X_k$ within $P_k$ that are absent in the independent reference $Q^{(\otimes k)}$.

Disentangling these components is crucial for interpreting the sources of divergence or model misfit. While established measures like mutual information and total correlation \cite{Watanabe1960} quantify aspects of statistical dependency, they do not typically offer an exact decomposition of the KL divergence itself relative to $Q^{(\otimes k)}$. Nor do they systematically partition the divergence into contributions from marginal mismatches versus a hierarchy of interaction orders. This leaves key diagnostic questions unanswered: In machine learning, does a model's poor fit (high KL divergence) stem primarily from misrepresenting individual feature distributions (marginal effects), or from failing to capture complex feature interactions (dependency effects)? In systems biology, does divergence from a baseline biological state arise from individual component dysregulation or from altered network interactions?

This paper presents an exact, additive decomposition of $\KL(P_k \| Q^{(\otimes k)})$ that directly addresses this challenge. We demonstrate algebraically that:
\begin{equation*}
\KL(P_k \| Q^{(\otimes k)}) = \underbrace{\left[ \sum_{i=1}^k \KL(P_i \| Q) \right]}_{\text{Sum of Marginal Divergences}} + \underbrace{\left[ \sum_{r=2}^k I^{(r)}(P_k) \right]}_{\substack{\text{Total Correlation /} \\ \text{Sum of Interactions ($r \ge 2$)}}}
\end{equation*}
Here, the first term isolates the divergence due solely to differences between each marginal $P_i$ and the reference $Q$. The second term is precisely the total correlation $C(P_k)$, which encapsulates the contributions from all statistical dependencies within $P_k$. Furthermore, we show this total correlation term decomposes hierarchically into the sum of $r$-way interaction information terms $I^{(r)}(P_k)$ for $r=2, \dots, k$.

The derivation leverages the fundamental properties of Shannon entropy and its relationship to interaction information, defined via Möbius inversion on the Boolean lattice of variable subsets \cite{AyEtAl2017, Amari2016}. This approach yields a principled and complete breakdown using only classical information-theoretic quantities.

This decomposition provides a structured framework for dissecting KL divergence, offering fine-grained insights by cleanly separating marginal-level mismatches from the cumulative contributions of pairwise, triplet, and higher-order statistical interactions. It constitutes an exact, additive, and hierarchical decomposition of $\KL(P_k \| Q^{(\otimes k)})$ into these standard information-theoretic components. The framework is validated numerically using multivariate hypergeometric models, confirming its exactness.

Preceding Section~\ref{sec:discussion} is an accessible synthesis of the core contribution for readers without the necessary prior mathematical foundations implicitly assumed elsewhere.

The paper is structured as follows:
\begin{itemize}
    \item Section~\ref{sec:theory}: Theoretical framework. Defines key concepts and rigorously derives the main decomposition theorem (Theorem~\ref{thm:KLDecomp}).
    \item Section~\ref{sec:visuals}: Visual illustrations. Illustrates the decomposition using diagrams and empirical data from numerical cases.
    \item Section~\ref{sec:experiments}: Experiments and numerical validation. Details the numerical validation methodology and presents results confirming the decomposition's exactness.
    \item Section~\ref{sec:intuition}: Intuition and Analogy. Accessible synthesis of the core contribution.
    \item Section~\ref{sec:discussion}: Discussion. Interprets the findings, discusses potential applications, connections to information geometry, and limitations.
    \item Section~\ref{sec:conclusion}: Conclusion. Summarises the contribution and outlines future research directions.
\end{itemize}

\section{Theoretical Framework}
\label{sec:theory}

\subsection{Definitions and Preliminaries}
Let $X = (X_1, \dots, X_k)$ be a discrete random vector over a finite alphabet $\mathcal{X}$, with joint probability distribution $P_k(x_1, \dots, x_k)$. Let $Q$ be a reference probability distribution on $\mathcal{X}$.

\begin{assumption}
\label{ass:placeholder}
We assume $Q(x) > 0$ for all $x$ in $\mathcal{X}$. This ensures that the logarithms in the KL divergence definitions are well-defined.
\end{assumption}

Define the product reference distribution:
\begin{equation*}
Q^{(\otimes k)}(x_1, \dots, x_k) = \prod_{i=1}^k Q(x_i)
\end{equation*}

We define $Q$ as a fixed reference marginal distribution over the support $\mathcal{X}$.
The product reference distribution is then constructed by taking the independent product of $k$ identical marginals:
\begin{equation*}
Q^{(\otimes k)}(x_1, \dots, x_k) := \prod_{i=1}^k Q(x_i).
\end{equation*}
That is, we evaluate the divergence of $P_k$ relative to an i.i.d. reference model in which each variable $X_i \sim Q$ independently.

For any subset $S \subseteq [k] = \{1, \dots, k\}$, let $X_S = \{X_i : i \in S\}$. Denote:
\begin{itemize}
    \item $P_S(x_S) = \sum_{x_{[k] \setminus S}} P_k(x_1, \dots, x_k)$, the marginal distribution of $X_S$. Notably, $P_i(x_i)$ is the marginal distribution of $X_i$.
    \item $H(X_S) = - \sum_{x_S} P_S(x_S) \log_2 P_S(x_S)$, the Shannon entropy of $X_S$ in bits, with the standard conventions $H(X_\emptyset) = 0$ and $0 \log_2 0 = 0$.
\end{itemize}

\begin{definition}[Interaction Information]
\label{def:interactioninfo}
The \textbf{interaction information} for a non-empty subset $S \subseteq [k]$ is defined via Möbius inversion of the negentropy function on the subset lattice:
\begin{equation*}
I(S) = - \sum_{T \subseteq S} (-1)^{|S| - |T|} H(X_T)
\end{equation*}
This aligns with information geometry conventions \cite{AyEtAl2017, Amari2016}. Intuitively, $I(S)$ measures the synergy or redundancy within the subset $S$—the portion of information generated by the variables acting together that cannot be attributed to any smaller sub-group.
\end{definition}

\begin{remark}[Sign Convention Warning]
\label{rem:signconvention}
The interaction information $I(S)$ is defined via Möbius inversion with alternating signs, following conventions from \cite{AyEtAl2017}.
This implies that for subsets $S$ with odd cardinality $|S|$, $I(S)$ is typically negative, and for even $|S|$, positive.
For instance:
\begin{itemize}
    \item $I(\{i\}) = -H(X_i)$
    \item $I(\{i,j\}) = I(X_i; X_j) \ge 0$ (standard mutual information)
    \item $I(\{i,j,k\})$ may be negative (synergy) or positive (redundancy). This is the negative of the quantity sometimes called co-information.
\end{itemize}
This alternating-sign convention is crucial for the algebraic consistency in the decomposition (Theorem~\ref{thm:KLDecomp}). It differs from conventions where all terms are defined non-negatively (e.g., \cite{WilliamsBeer2010}).
\end{remark}

\begin{remark}[Sign Convention and Examples]
\label{rem:SignConventionExamples}
The interaction information $I(S)$ varies in sign interpretation depending on $|S|$:
\begin{itemize}
    \item $|S| = 1$: $I(\{i\}) = -H(X_i)$, the negative marginal entropy.
    \item $|S| = 2$: $I(\{i,j\}) = H(X_i) + H(X_j) - H(X_{i,j})$, which is the standard mutual information $I(X_i; X_j) \ge 0$.
    \item $|S| = 3$: $I(\{i,j,k\}) = -[H(X_{i,j,k}) - H(X_{i,j}) - H(X_{i,k}) - H(X_{j,k}) + H(X_i) + H(X_j) + H(X_k)]$. This is the negative of the standard co-information or interaction information measure \cite{McGill1954}. A positive $I(\{i,j,k\})$ in our convention indicates redundancy, while a negative value indicates synergy.
\end{itemize}
The signs alternate based on $|S|$’s parity relative to mutual information. This sign convention, consistent with \cite{AyEtAl2017, Amari2016}, ensures that $I(\{i,j\})$ matches the standard non-negative mutual information $I(X_i; X_j)$, while allowing the decomposition (Theorem~\ref{thm:KLDecomp}) to hold algebraically across all orders.
\end{remark}

\begin{definition}[Total $r$-way Interaction Information]
\label{def:TotalInteractionInfo}
The \textbf{total $r$-way interaction information} is the sum of interaction information over all subsets of size $r$:
\begin{equation*}
I^{(r)}(P_k) = \sum_{S \subseteq [k], |S|=r} I(S) \quad \text{for } r = 1, \dots, k.
\end{equation*}
Based on the sign convention:
\begin{itemize}
    \item $I^{(1)}(P_k) = \sum_{i=1}^k I(\{i\}) = - \sum_{i=1}^k H(X_i)$.
    \item $I^{(2)}(P_k) = \sum_{1 \le i < j \le k} I(X_i; X_j)$, the sum of all pairwise mutual informations.
    \item $I^{(3)}(P_k)$ sums all triplet interactions $I(\{i,j,k\})$, and so on.
\end{itemize}
\end{definition}

\begin{definition}[Total Correlation]
\label{def:TotalCorrelation}
The \textbf{total correlation} (or multi-information) \cite{Watanabe1960} quantifies the total amount of statistical dependency among the variables in $X_{[k]}$:
\begin{equation*}
C(P_k) = \sum_{i=1}^k H(X_i) - H(X_{[k]})
\end{equation*}
It measures the redundancy among the variables, or equivalently, the KL divergence between the joint distribution and the product of its marginals: $C(P_k) = \KL(P_k \| \prod_{i=1}^k P_i) \ge 0$.
\end{definition}

\begin{definition}[Kullback-Leibler Divergence]
\label{def:KLDiv}
The \textbf{KL divergence} between the joint distribution $P_k$ and the product reference distribution $Q^{(\otimes k)}$ is:
\begin{equation*}
\KL(P_k \| Q^{(\otimes k)}) = \sum_{x_1, \dots, x_k} P_k(x_1, \dots, x_k) \log_2 \left[ \frac{P_k(x_1, \dots, x_k)}{\prod_{i=1}^k Q(x_i)} \right]
\end{equation*}
The marginal KL divergence between the marginal distribution $P_i$ and the reference distribution $Q$ is:
\begin{equation*}
\KL(P_i \| Q) = \sum_{x_i} P_i(x_i) \log_2 \left[ \frac{P_i(x_i)}{Q(x_i)} \right]
\end{equation*}
\end{definition}

\subsection{Core Identities}
We establish the crucial link between total correlation and the sum of interaction information terms.

\begin{lemma}[Total Correlation from Interaction Information]
\label{lem:TotalCorrelationII}
The sum of interaction information terms of order 2 and higher equals the total correlation:
\begin{equation} \label{eq:TotalCorrFromII}
\sum_{r=2}^k I^{(r)}(P_k) = C(P_k)
\end{equation}
\end{lemma}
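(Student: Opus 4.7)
The plan is to invert the definition of $I(S)$ via Möbius inversion on the Boolean lattice, express every marginal and joint entropy appearing in $C(P_k)$ as a sum of interaction-information terms, and then observe that the $r=1$ contributions cancel, leaving exactly the claimed sum over $r \ge 2$.

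First I would recall that Definition~\ref{def:interactioninfo} writes $I(S) = \sum_{T \subseteq S} (-1)^{|S|-|T|} (-H(X_T))$, which is the standard Möbius-inversion form $g(S) = \sum_{T \subseteq S}(-1)^{|S|-|T|} f(T)$ with $g(T) = I(T)$ and $f(T) = -H(X_T)$. The Möbius inversion theorem on the Boolean lattice then gives the companion identity
\begin{equation*}
-H(X_S) \;=\; \sum_{T \subseteq S} I(T) \qquad \text{for every } S \subseteq [k].
\end{equation*}
Checking that $I(\emptyset) = 0$ (since $H(X_\emptyset) = 0$) lets me restrict the sum on the right to non-empty $T$, and the singleton case $S = \{i\}$ reproduces the consistency relation $I(\{i\}) = -H(X_i)$ recorded in Remark~\ref{rem:SignConventionExamples}.

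Next I would specialise this inverted identity in two ways. For $S = [k]$, grouping the right-hand side by subset size yields
\begin{equation*}
H(X_{[k]}) \;=\; -\sum_{r=1}^{k} I^{(r)}(P_k).
\end{equation*}
For each singleton $S = \{i\}$, the identity gives $H(X_i) = -I(\{i\})$, so summing over $i$ produces $\sum_{i=1}^k H(X_i) = -I^{(1)}(P_k)$. Substituting both expressions into Definition~\ref{def:TotalCorrelation},
\begin{equation*}
C(P_k) \;=\; \sum_{i=1}^k H(X_i) - H(X_{[k]}) \;=\; -I^{(1)}(P_k) + \sum_{r=1}^{k} I^{(r)}(P_k) \;=\; \sum_{r=2}^{k} I^{(r)}(P_k),
\end{equation*}
which is the claim.

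The main obstacle is bookkeeping with the sign convention of Remark~\ref{rem:signconvention}: because $I(S)$ here is defined as the Möbius inversion of \emph{negentropy} rather than entropy, one must carefully verify that the inverted relation reads $-H(X_S) = \sum_{T \subseteq S} I(T)$ (not $+H(X_S)$) before taking differences. Once that parity check is in hand, the cancellation of the $r=1$ term between $\sum_i H(X_i)$ and $-H(X_{[k]})$ is automatic, and no combinatorial identity beyond Möbius inversion on the Boolean lattice is required.
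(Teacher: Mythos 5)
Your proposal is correct and follows essentially the same route as the paper: invert the Möbius definition to get $-H(X_{[k]}) = \sum_{r=1}^{k} I^{(r)}(P_k)$, then peel off the $r=1$ term via $I^{(1)}(P_k) = -\sum_{i=1}^k H(X_i)$ and identify the remainder with $C(P_k)$. Your explicit check that $I(\emptyset)=0$ and the verification of the sign of the inverted relation are slightly more careful than the paper's presentation, but the argument is the same.
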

\begin{proof}
From Definition~\ref{def:interactioninfo}, the relationship between entropy and interaction information is given by Möbius inversion on the subset lattice. The inverse relation states that for any non-empty subset $S \subseteq [k]$:
\begin{equation*}
-H(X_S) = \sum_{\substack{T \subseteq S \\ T \ne \emptyset}} I(T)
\end{equation*}
Applying this to the full set $S = [k]$:
\begin{equation*}
-H(X_{[k]}) = \sum_{\substack{T \subseteq [k] \\ T \ne \emptyset}} I(T) = \sum_{r=1}^k \sum_{\substack{T \subseteq [k] \\ |T|=r}} I(T) = \sum_{r=1}^k I^{(r)}(P_k)
\end{equation*}
We can split the sum at $r=1$:
\begin{equation*}
-H(X_{[k]}) = I^{(1)}(P_k) + \sum_{r=2}^k I^{(r)}(P_k)
\end{equation*}
Substituting the definition $I^{(1)}(P_k) = - \sum_{i=1}^k H(X_i)$:
\begin{equation*}
-H(X_{[k]}) = - \sum_{i=1}^k H(X_i) + \sum_{r=2}^k I^{(r)}(P_k)
\end{equation*}
Rearranging the terms yields:
\begin{equation*}
\sum_{i=1}^k H(X_i) - H(X_{[k]}) = \sum_{r=2}^k I^{(r)}(P_k)
\end{equation*}
The left-hand side is precisely the definition of total correlation $C(P_k)$ (Definition~\ref{def:TotalCorrelation}). Therefore:
\begin{equation*}
C(P_k) = \sum_{r=2}^k I^{(r)}(P_k)
\end{equation*}
\end{proof}

\paragraph{Identity Check.}
The core relationships underpinning the decomposition are:
\begin{enumerate}
\item $I^{(1)}(P_k) = - \sum_{i=1}^k H(X_i)$ (by definition of total 1-way interaction)
\item $\sum_{r=1}^k I^{(r)}(P_k) = -H(X_{[k]})$ (by Möbius inversion)
\item $C(P_k) = \sum_{i=1}^k H(X_i) - H(X_{[k]})$ (by definition of total correlation)
\end{enumerate}
Combining (1) and (2) yields $-H(X_{[k]}) = - \sum H(X_i) + \sum_{r=2}^k I^{(r)}(P_k)$, which directly rearranges to $C(P_k) = \sum_{r=2}^k I^{(r)}(P_k)$, confirming Lemma~\ref{lem:TotalCorrelationII}.

\subsection{Main Theorem: KL Decomposition}

\begin{theorem}[Hierarchical KL Decomposition]
\label{thm:KLDecomp}
The KL divergence between the joint distribution $P_k$ and the product reference distribution $Q^{(\otimes k)}$ decomposes exactly as:
\begin{equation} \label{eq:KLDecomp_InteractionForm}
\KL(P_k \| Q^{(\otimes k)}) = \sum_{r=2}^k I^{(r)}(P_k) + \sum_{i=1}^k \KL(P_i \| Q)
\end{equation}

Alternatively, using the identity from Lemma~\ref{lem:TotalCorrelationII}:
\begin{equation} \label{eq:KLDecomp_TotalCorrForm}
\KL(P_k \| Q^{(\otimes k)}) = C(P_k) + \sum_{i=1}^k \KL(P_i \| Q)
\end{equation}
\end{theorem}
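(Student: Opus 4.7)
The plan is to prove the Total Correlation Form (equation \ref{eq:KLDecomp_TotalCorrForm}) directly from the definitions of KL divergence and entropy, and then obtain the Interaction Information Form (equation \ref{eq:KLDecomp_InteractionForm}) by invoking Lemma~\ref{lem:TotalCorrelationII}. The derivation is essentially a controlled algebraic rearrangement, so the plan is to keep the bookkeeping transparent rather than introduce new machinery.

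First, I would expand $\KL(P_k \| Q^{(\otimes k)})$ via Definition~\ref{def:KLDiv} and split the logarithm of the ratio as $\log_2 P_k(x_1,\dots,x_k) - \sum_{i=1}^k \log_2 Q(x_i)$. Summing the first piece against $P_k$ produces $-H(X_{[k]})$. For the second piece, I would exploit the fact that each $\log_2 Q(x_i)$ depends on only one coordinate: marginalizing over the remaining variables gives the key identity
\begin{equation*}
\sum_{x_1,\dots,x_k} P_k(x_1,\dots,x_k)\,\log_2 Q(x_i) \;=\; \sum_{x_i} P_i(x_i)\,\log_2 Q(x_i),
\end{equation*}
reducing the multivariate cross-entropy-like term to $k$ univariate ones involving only the marginals $P_i$.

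Next comes the single algebraic trick needed. Using $\KL(P_i\|Q) = -H(X_i) - \sum_{x_i} P_i(x_i)\log_2 Q(x_i)$, I would add and subtract $\sum_{i=1}^k H(X_i)$ so that each univariate term becomes $\KL(P_i\|Q) + H(X_i)$. This yields
\begin{equation*}
\KL(P_k \| Q^{(\otimes k)}) \;=\; -H(X_{[k]}) + \sum_{i=1}^k H(X_i) + \sum_{i=1}^k \KL(P_i\|Q),
\end{equation*}
and recognizing $\sum_{i=1}^k H(X_i) - H(X_{[k]})$ as $C(P_k)$ via Definition~\ref{def:TotalCorrelation} establishes equation \ref{eq:KLDecomp_TotalCorrForm}. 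Substituting the identity $C(P_k) = \sum_{r=2}^k I^{(r)}(P_k)$ from Lemma~\ref{lem:TotalCorrelationII} then delivers the interaction form \ref{eq:KLDecomp_InteractionForm}.

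There is no deep obstacle here; the closest thing to a subtle step is the marginalization identity above, which I would justify by noting that $Q(x_i)$ is a function of $x_i$ alone, so Fubini-style rearrangement of the joint sum collapses it into a single-variable sum against $P_i$. Assumption~\ref{ass:placeholder} ($Q>0$) guarantees all logarithms are finite, and the standard convention $0\log_2 0 = 0$ handles any zeros of $P_k$ or $P_i$. Everything else is substitution, so the principal care required is in tracking signs when passing from entropies to cross-entropies to KL divergences.
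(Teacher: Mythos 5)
Your proposal is correct and follows essentially the same route as the paper's proof: expand the logarithm, collapse the reference term to the marginals by Fubini-style marginalization, rewrite each univariate term as $H(X_i) + \KL(P_i \| Q)$, recognize $\sum_i H(X_i) - H(X_{[k]})$ as $C(P_k)$, and invoke Lemma~\ref{lem:TotalCorrelationII} for the interaction form. No gaps.
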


\begin{proof}
\textbf{Step 1: Expand the KL Divergence}

Start with the definition of KL divergence (Definition~\ref{def:KLDiv}):
\begin{align*}
\KL(P_k \| Q^{(\otimes k)}) &= \sum_x P_k(x) \log_2 \left[ \frac{P_k(x)}{Q^{(\otimes k)}(x)} \right] \\
                             &= \sum_x P_k(x) \left[ \log_2 P_k(x) - \log_2 \left(\prod_{i=1}^k Q(x_i)\right) \right] \\
                             &= \left[ \sum_x P_k(x) \log_2 P_k(x) \right] - \sum_x P_k(x) \sum_{i=1}^k \log_2 Q(x_i) \\
                             &= -H(X_{[k]}) - \sum_x P_k(x) \sum_{i=1}^k \log_2 Q(x_i) \tag{1}
\end{align*}
The second term can be rewritten by swapping the order of summation and marginalising. Since $\log_2 Q(x_i)$ depends only on $x_i$, we have:
\begin{align*}
\sum_x P_k(x) \sum_{i=1}^k \log_2 Q(x_i) &= \sum_{i=1}^k \sum_x P_k(x) \log_2 Q(x_i) \\
                                         &= \sum_{i=1}^k \sum_{x_i} \left( \sum_{x_{[k]\setminus\{i\}}} P_k(x_i, x_{[k]\setminus\{i\}}) \right) \log_2 Q(x_i) \\
                                         &= \sum_{i=1}^k \sum_{x_i} P_i(x_i) \log_2 Q(x_i) \quad \text{(where $P_i$ is the marginal)}
\end{align*}
Substituting back into Equation (1):
\begin{equation} \label{eq:kl_interm}
\KL(P_k \| Q^{(\otimes k)}) = -H(X_{[k]}) - \sum_{i=1}^k \sum_{x_i} P_i(x_i) \log_2 Q(x_i)
\end{equation}

\textbf{Step 2: Relate to Marginal Divergences}

Recall the definition of the marginal KL divergence (Definition~\ref{def:KLDiv}):
\begin{align*}
\KL(P_i \| Q) &= \sum_{x_i} P_i(x_i) \log_2 \left[ \frac{P_i(x_i)}{Q(x_i)} \right] \\
              &= \sum_{x_i} P_i(x_i) \log_2 P_i(x_i) - \sum_{x_i} P_i(x_i) \log_2 Q(x_i) \\
              &= -H(X_i) - \sum_{x_i} P_i(x_i) \log_2 Q(x_i)
\end{align*}
Rearranging this gives:
\begin{equation*}
- \sum_{x_i} P_i(x_i) \log_2 Q(x_i) = H(X_i) + \KL(P_i \| Q)
\end{equation*}
Summing this expression over all variables $i = 1, \dots, k$:
\begin{equation*}
- \sum_{i=1}^k \sum_{x_i} P_i(x_i) \log_2 Q(x_i) = \sum_{i=1}^k H(X_i) + \sum_{i=1}^k \KL(P_i \| Q)
\end{equation*}

\textbf{Step 3: Substitute and Identify Total Correlation}

Substitute this result back into the expression for the total KL divergence (see Equation~\ref{eq:kl_interm}):
\begin{equation*}
\KL(P_k \| Q^{(\otimes k)}) = -H(X_{[k]}) + \left[ \sum_{i=1}^k H(X_i) + \sum_{i=1}^k \KL(P_i \| Q) \right]
\end{equation*}
Rearranging terms:
\begin{align*}
\KL(P_k \| Q^{(\otimes k)}) &= \left[ \sum_{i=1}^k H(X_i) - H(X_{[k]}) \right] + \sum_{i=1}^k \KL(P_i \| Q) \\
                            &= C(P_k) + \sum_{i=1}^k \KL(P_i \| Q)
\end{align*}
This yields the second form of the theorem (\ref{eq:KLDecomp_TotalCorrForm}).

\textbf{Step 4: Use Interaction Information Identity}

Finally, substitute the identity $C(P_k) = \sum_{r=2}^k I^{(r)}(P_k)$ from Lemma~\ref{lem:TotalCorrelationII} into (\ref{eq:KLDecomp_TotalCorrForm}):
\begin{equation*}
\KL(P_k \| Q^{(\otimes k)}) = \sum_{r=2}^k I^{(r)}(P_k) + \sum_{i=1}^k \KL(P_i \| Q)
\end{equation*}
This completes the proof of the first form of the theorem (\ref{eq:KLDecomp_InteractionForm}).
\end{proof}

\begin{remark}[Interpretation]
\label{rem:decompinterpretation}
Theorem~\ref{thm:KLDecomp} provides a precise decomposition of the total KL divergence into two fundamentally different types of contributions:
\begin{enumerate}
    \item \textbf{Sum of Marginal Divergences}: $\sum_{i=1}^k \KL(P_i \| Q)$. This term quantifies the divergence attributable solely to the deviation of each individual variable's marginal distribution $P_i$ from the reference distribution $Q$. If all marginals match the reference ($P_i = Q$ for all $i$), this term is zero.
    \item \textbf{Sum of Higher-Order Interactions / Total Correlation}: $\sum_{r=2}^k I^{(r)}(P_k) = C(P_k)$. This term quantifies the divergence attributable to the statistical dependencies among the variables $X_1, \dots, X_k$ within the joint distribution $P_k$. It represents the information gain from using the true joint distribution $P_k$ compared to using the product of its marginals $\prod P_i$, adjusted for the reference $Q$. If the variables are independent in $P_k$ (i.e., $P_k = \prod P_i$), then $C(P_k) = 0$ and this term vanishes.
\end{enumerate}
This decomposition provides a precise accounting: the total divergence is the sum of divergences attributable solely to individual variable distributions deviating from $Q$, plus the total correlation, which encapsulates all effects arising from statistical dependencies within $P_k$. If marginals match $Q$, divergence is purely due to interactions; if variables are independent in $P_k$, divergence is purely due to marginal mismatches.
\end{remark}

\section{Visual Illustrations of the Decomposition}
\label{sec:visuals}
To visualise the components of the decomposition presented in Theorem~\ref{thm:KLDecomp}, we illustrate Case 1 ($k=3$, symmetric) from our experiments. These visualisations confirm that the decomposition accurately partitions the total divergence and provides a structured view of marginal versus interaction contributions.

\begin{figure}[ht]
    \centering
    \includegraphics[width=0.8\linewidth]{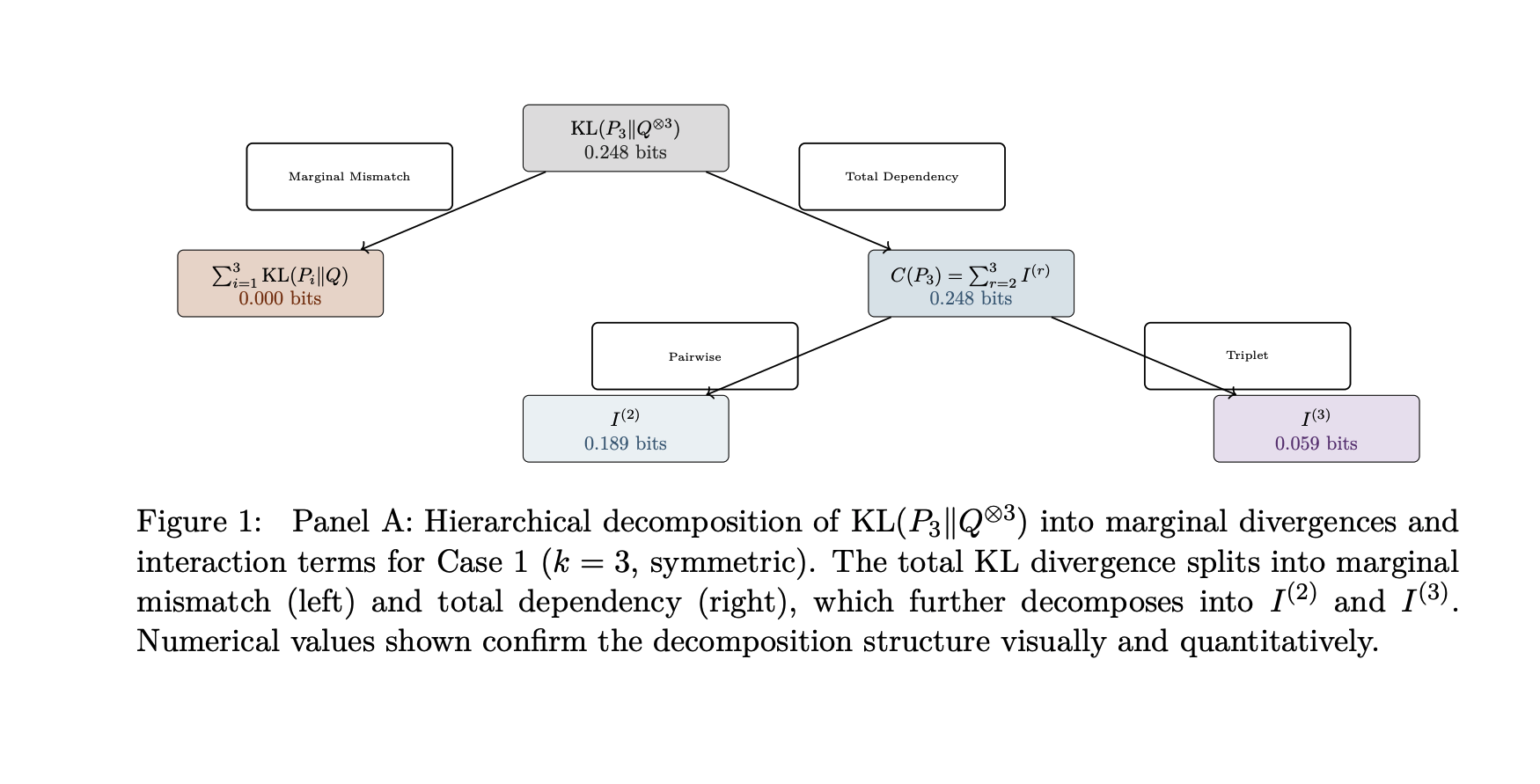}
    \caption{Panel A: Hierarchical decomposition for Case 1 ($k=3$, symmetric).}
    \label{fig:panelA}
\end{figure}

\begin{figure}[ht]
    \centering
    \includegraphics[width=0.8\linewidth]{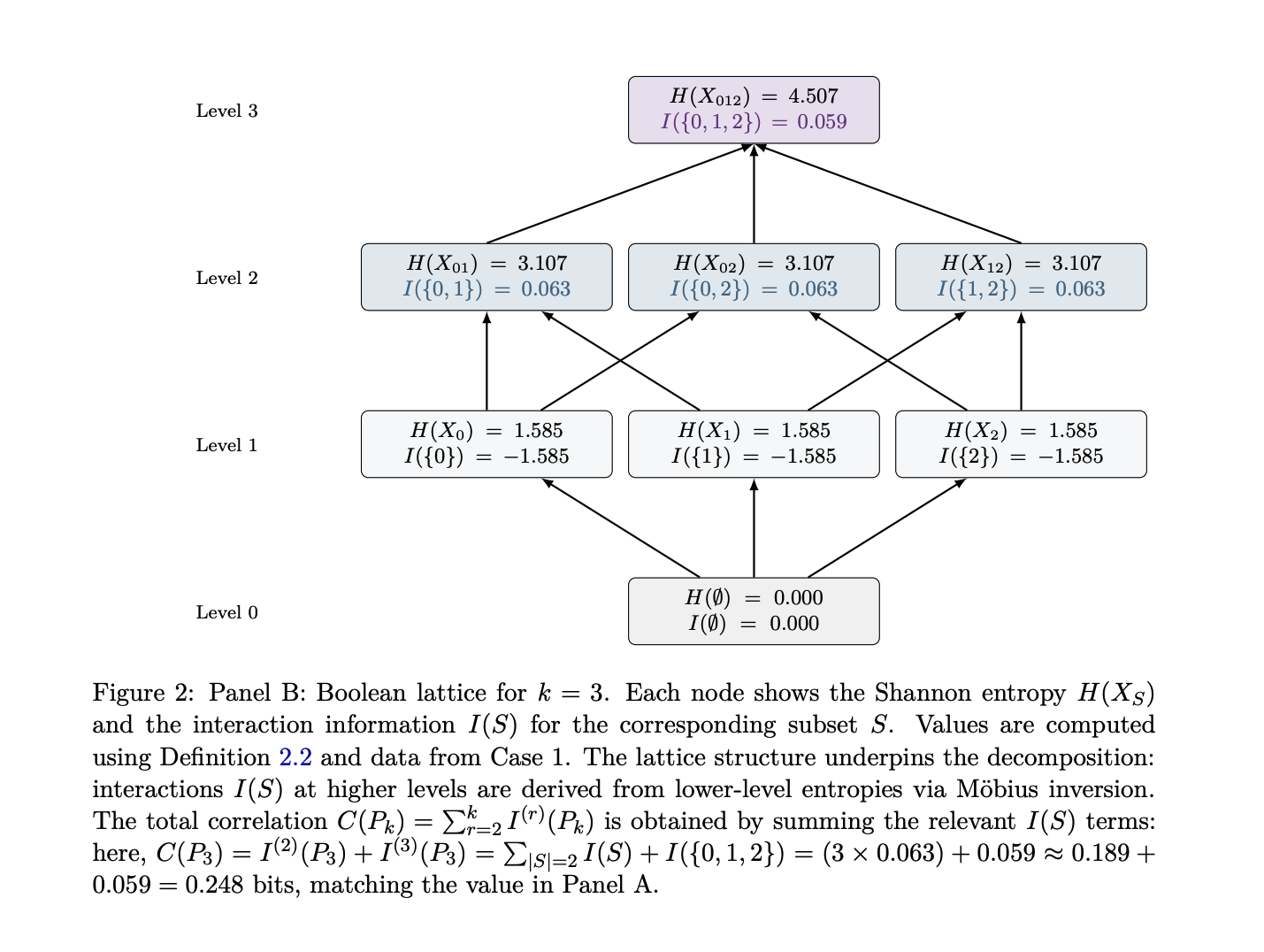}
    \caption{Panel B: Boolean lattice illustration for Case 1.}
    \label{fig:panelB}
\end{figure}

\begin{figure}[ht]
\centering
\includegraphics[width=1\textwidth]{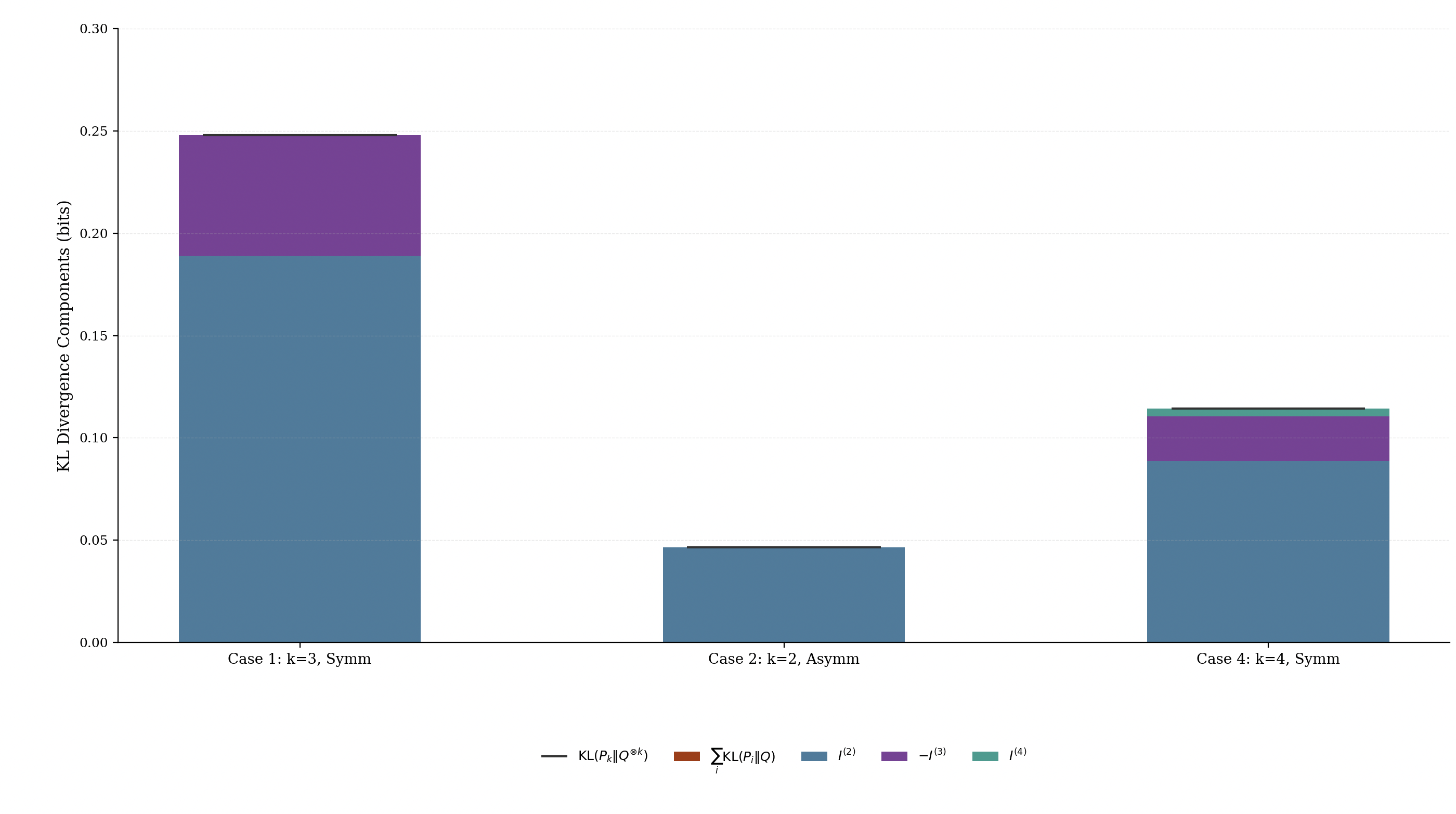}
\caption{Panel C: Empirical validation via stacked bar chart for three test cases (Case 1: $k=3$, symm; Case 2: $k=2$, asymm; Case 4: $k=4$, symm). Bars show the components of the recomposed KL divergence based on Theorem~\ref{thm:KLDecomp}: Sum of marginal KLs $\sum \KL(P_i \| Q)$ (rust red, negligible in these cases as $P_i \approx Q$), total pairwise interaction $I^{(2)}$ (blue), total three-way interaction $I^{(3)}$ (purple), and total four-way interaction $I^{(4)}$ (teal, Case 4 only). The black line marks the independently computed total $\KL(P_k \| Q^{\otimes k})$. The near-perfect match between the stacked bars and the black line (residuals $<10^{-15}$ bits, see Appendix~\ref{app:validationsummary}) numerically validates the exactness of the decomposition theorem across different system sizes and symmetries.}
\label{fig:panelC}
\end{figure}
\clearpage

\section{Experiments and Numerical Validation}
\label{sec:experiments}
We validated Theorem~\ref{thm:KLDecomp} numerically using a multivariate hypergeometric sampling model. This model is advantageous because sampling $k$ items without replacement from a finite population naturally introduces statistical dependencies (typically negative correlations) among the sampled variables $X_1, \dots, X_k$. Crucially, if the population proportions match the reference distribution $Q$, the marginal distribution $P_i$ of any single draw $X_i$ remains exactly $Q$. This setup allows us to isolate and test the interaction terms $I^{(r)}$ of the decomposition, as the marginal KL divergence term $\sum_{i=1}^k \KL(P_i \| Q)$ is theoretically zero (or numerically negligible due to floating-point precision).

We used test cases derived from the data file \texttt{kl\_decomposition\_databaseCORRECTED.json}, including:
\begin{itemize}
    \item \textbf{Case 1: $k=3$, Symmetric Population}
        \begin{itemize}
            \item Population size $n=6$, number of draws $k=3$.
            \item Population counts $Q_{\text{scaled}} = \{0: 2, 1: 2, 2: 2\}$.
            \item Reference distribution $Q = \{0: 1/3, 1: 1/3, 2: 1/3\}$.
        \end{itemize}
    \item \textbf{Case 2: $k=2$, Asymmetric Population}
        \begin{itemize}
            \item Population size $n=5$, number of draws $k=2$.
            \item Population counts $Q_{\text{scaled}} = \{0: 3, 1: 2\}$.
            \item Reference distribution $Q = \{0: 0.6, 1: 0.4\}$.
        \end{itemize}
    \item \textbf{Case 4: $k=4$, Symmetric Binary Population}
        \begin{itemize}
            \item Population size $n=8$, number of draws $k=4$.
            \item Population counts $Q_{\text{scaled}} = \{\text{'A'}: 4, \text{'B'}: 4\}$.
            \item Reference distribution $Q = \{\text{'A'}: 0.5, \text{'B'}: 0.5\}$.
        \end{itemize}
\end{itemize}

\begin{remark}[Vanishing Marginal Divergences in Symmetric Sampling]
In the multivariate hypergeometric setup, when population proportions exactly match the reference distribution $Q$, each marginal $P_i$ satisfies $P_i = Q$ due to the symmetry of the sampling process. This ensures that
\begin{equation*}
\KL(P_i \| Q) = 0 \quad \text{for all } i = 1, \dots, k,
\end{equation*}
so that the entire KL divergence arises from statistical dependencies, i.e., the interaction structure in $P_k$.
\end{remark}

For each case, the validation involved the following computational steps:
\begin{enumerate}
    \item Computing the exact joint probability distribution $P_k$ using multivariate hypergeometric probabilities.
    \item Calculating the product reference distribution $Q^{(\otimes k)}$.
    \item Computing the total KL divergence $\KL(P_k \| Q^{(\otimes k)})$ directly from the distributions.
    \item Calculating all necessary subset marginal distributions $P_S$ and their Shannon entropies $H(X_S)$.
    \item Computing all interaction information terms $I(S)$ using Definition~\ref{def:interactioninfo} and summing them to get the total $r$-way interactions $I^{(r)}(P_k)$.
    \item Computing the sum of marginal divergences $\sum_{i=1}^k \KL(P_i \| Q)$ (expected to be near zero).
    \item Comparing the directly computed $\KL(P_k \| Q^{(\otimes k)})$ with the recomposed value from Theorem~\ref{thm:KLDecomp}: $\sum_{i=1}^k \KL(P_i \| Q) + \sum_{r=2}^k I^{(r)}(P_k)$.
\end{enumerate}

\textbf{Results:} In all tested cases, the residual difference between the directly computed KL divergence and the value recomposed using the decomposition theorem was less than $10^{-15}$ bits, confirming the exactness of the decomposition to machine precision. Key numerical results are summarised below (see Appendix~\ref{app:validationsummary} for details):
\begin{itemize}
    \item \textbf{Case 1}: $\KL_{\text{direct}} \approx 0.247997$, $\sum_{r=2}^3 I^{(r)} \approx 0.247997$. Residual $\approx 1.67 \times 10^{-16}$. The decomposition correctly attributes the divergence entirely to interactions ($I^{(2)} \approx 0.189$, $I^{(3)} \approx 0.059$).
    \item \textbf{Case 2}: $\KL_{\text{direct}} \approx 0.046439$, $I^{(2)} \approx 0.046439$. Residual $\approx 6.94 \times 10^{-17}$. For $k=2$, the decomposition reduces to $\KL = C(P_k) = I^{(2)}$, matching the standard result for mutual information.
    \item \textbf{Case 4}: $\KL_{\text{direct}} \approx 0.114412$, $\sum_{r=2}^4 I^{(r)} \approx 0.114412$. Residual $\approx 5.69 \times 10^{-16}$. This case reveals a hierarchy: pairwise interactions ($I^{(2)} \approx 0.0886$) dominate, but triplet interactions ($I^{(3)} \approx -0.0219$, indicating synergy) and the quadruplet interaction ($I^{(4)} \approx 0.0039$) make non-negligible contributions.
\end{itemize}

These results empirically confirm the validity of Theorem~\ref{thm:KLDecomp}. Figure~\ref{fig:panelC} visually summarises the decomposition for these cases. Furthermore, the identity $C(P_k) = \sum_{r=2}^k I^{(r)}(P_k)$ (Lemma~\ref{lem:TotalCorrelationII}) was independently verified by comparing the sum of interactions to the directly computed total correlation $C(P_k) = \sum H(X_i) - H(X_{[k]})$, with residuals also below machine precision ($< 10^{-15}$ bits), confirming the internal consistency of the framework.

\section{Intuition and Analogy: Understanding the KL Decomposition}
\label{sec:intuition}

When comparing a complex real-world system to a simplified model, a fundamental question arises: \textbf{``When our real system deviates from our simplified expectation, where exactly does this divergence originate?''}

Consider a multivariate system described by a joint probability distribution $P_k(X_1,\dots,X_k)$ — perhaps:
\begin{itemize}
    \item User interactions across features in a recommendation system.
    \item Gene expression patterns in a biological network.
    \item Fluctuations of multiple assets in a financial portfolio.
\end{itemize}

We compare this real system ($P_k$) to a baseline model $Q^{(\otimes k)}$. This baseline makes two strong assumptions about the system's behaviour:
\begin{enumerate}
    \item \textbf{Individual Behaviour}: Each variable $X_i$ follows the same reference distribution $Q$.
    \item \textbf{Independence}: All variables act independently of one another.
\end{enumerate}

The Kullback-Leibler (KL) divergence, $\KL(P_k \| Q^{(\otimes k)})$, quantifies the overall mismatch or ``surprise'' in observing the real system $P_k$ when we expected the simple, independent baseline $Q^{(\otimes k)}$. A higher KL divergence signifies a greater deviation from the baseline expectation.

Our decomposition, presented in Theorem~\ref{thm:KLDecomp}, partitions this single divergence value into precisely two interpretable components, corresponding directly to violations of each baseline assumption:

\subsubsection{Marginal Mismatches — Violations of Assumption (1)}
The first component is the sum of marginal KL divergences:
\begin{equation*}
\sum_{i=1}^k \KL(P_i \| Q)
\end{equation*}
This term measures how much each individual variable $X_i$, viewed in isolation, deviates from the reference distribution $Q$. It quantifies the divergence attributable solely to individual components behaving differently than expected under the baseline, irrespective of any interactions between them.

In our experimental validation using hypergeometric sampling (e.g., Case 1, Figure~\ref{fig:panelC}), this term was approximately zero. This was by design, as the sampling method ensured that individual marginal distributions ($P_i$) matched the reference ($Q$) exactly, thereby isolating the effects of interactions.

\subsubsection{Interaction Structure (Total Correlation) — Violations of Assumption (2)}
The second component is the total correlation, $C(P_k)$:
\begin{equation*}
C(P_k) = \sum_{r=2}^k I^{(r)}(P_k)
\end{equation*}
This term captures the entire contribution to divergence arising from statistical dependencies between variables—relationships that violate the baseline's independence assumption (Assumption 2). Even if every variable individually matches the reference distribution perfectly ($\sum \KL(P_i \| Q) = 0$), their joint behaviour within $P_k$ may reveal complex interdependencies absent in the independent baseline $Q^{(\otimes k)}$.

As shown theoretically (Lemma~\ref{lem:TotalCorrelationII}) and numerically, the total correlation $C(P_k)$ itself decomposes hierarchically. It is the sum of contributions from layers of increasingly complex interactions:
\begin{itemize}
    \item \textbf{Pairwise interactions} ($I^{(2)}$): Captures how pairs of variables influence each other, deviating from pairwise independence.
    \item \textbf{Triplet interactions} ($I^{(3)}$): Measures emergent dependencies visible only when considering three variables together, beyond what pairwise interactions explain.
    \item \textbf{Higher-order interactions} ($I^{(r)}$ for $r > 3$): Accounts for dependencies involving groups of four or more variables.
\end{itemize}

In our experiments (summarised visually in Figure~\ref{fig:panelC}), we observed how these interaction terms ($r \ge 2$) accounted for essentially all the divergence in the hypergeometric cases. Pairwise interactions ($I^{(2)}$, blue sections) often contributed most significantly, but higher-order terms ($I^{(3)}$ purple, $I^{(4)}$ teal) were also non-negligible, demonstrating the importance of capturing complex dependency structures.

\subsubsection{Analogy: Diagnosing an Orchestra's Performance}
Consider an orchestra performing a symphony. The reference baseline model $Q^{(\otimes k)}$ is analogous to expecting:
\begin{enumerate}
    \item Each musician ($X_i$) plays their part precisely according to a standard interpretation ($Q$).
    \item Every musician plays independently, without coordinating or reacting to others.
\end{enumerate}

When we listen to the actual performance ($P_k$), our decomposition helps diagnose deviations from this simplistic baseline:

\textbf{Marginal Mismatches}: This corresponds to individual musicians deviating from the standard interpretation ($Q$) for their part. Perhaps the shy oboist plays consistently too pianissimo, or a violinist's intonation is slightly flat throughout. These are errors traceable to individual performers, regardless of the ensemble.

\textbf{Interaction Structure}: This captures deviations from the unrealistic ``independent play'' assumption. Even if every musician executes their part perfectly when considered alone, the ensemble dynamics create dependencies:
\begin{itemize}
    \item \textbf{Pairwise interactions ($I^{(2)}$)}: The string section's timing relative to the percussion section might be slightly misaligned. The lead violin might adjust their dynamics based on the cello's phrasing.
    \item \textbf{Higher-order interactions ($I^{(3)}$, $I^{(4)}$, ...)}: Complex textures and coordinated swells emerge from entire sections (e.g., woodwinds and brass) resonating together in ways that cannot be reduced to simple pairwise adjustments. The overall blend and balance achieved by the conductor represent systemic dependencies.
\end{itemize}

This analogy mirrors our numerical findings. For instance, in Case 4 (Figure~\ref{fig:panelC}), the total divergence (0.114 bits) wasn't just due to pairwise issues (0.089 bits), but also included significant contributions from triplet (-0.022 bits, indicating synergy perhaps) and quadruplet (0.004 bits) interactions within the ``ensemble'' of variables.

\subsubsection{Diagnostic Value Across Domains}
This decomposition provides numerically tractable and precise diagnostic insight when analysing complex systems against independent baselines:
\begin{itemize}
    \item If marginal mismatches dominate => focus on improving the modelling of individual components or understanding univariate deviations.
    \item If interaction terms dominate (as in our validation cases) => focus on capturing dependencies, network structures, or coordination mechanisms within the system.
\end{itemize}

Unlike heuristic approximations, this framework offers a mathematically exact partition of the KL divergence—the sum of the components equals the total KL divergence to machine precision, as confirmed in our numerical experiments (residuals $< 10^{-15}$ bits, see Appendix~\ref{app:validationsummary}).

Whether analysing model misfit in machine learning, network perturbations in systems biology, or equilibrium deviations in economics, this decomposition identifies not just that a system differs from expectation, but precisely how and where in its marginal versus interaction structure that difference resides.

\section{Discussion}
\label{sec:discussion}
The decomposition derived in Theorem~\ref{thm:KLDecomp}, $\KL(P_k \| Q^{(\otimes k)}) = \sum_{i=1}^k \KL(P_i \| Q) + C(P_k)$, provides an exact and interpretable partition of the total KL divergence into components reflecting marginal mismatches and statistical dependencies.

\textbf{Interpretation and Diagnostic Value:}
The decomposition offers a clear diagnostic lens.
\begin{itemize}
    \item \textbf{Separation of Effects}: It precisely distinguishes divergence arising from local, single-variable properties (the marginal distributions $P_i$ differing from the reference $Q$) from divergence arising from global, systemic properties (the statistical dependencies captured by total correlation $C(P_k)$). A large $\sum \KL(P_i \| Q)$ indicates divergence primarily driven by individual variable mismatches, suggesting focus on refining marginal models or understanding univariate deviations. Conversely, a large $C(P_k)$ points to divergence stemming from statistical dependencies not present in the independent reference $Q^{(\otimes k)}$, suggesting the need to model or understand these interactions.
    \item \textbf{Role of Total Correlation}: The decomposition highlights the fundamental role of total correlation $C(P_k)$ as the component of KL divergence attributable to dependencies. When the marginals match the reference ($P_i = Q$), the decomposition simplifies to $\KL(P_k \| Q^{(\otimes k)}) = C(P_k)$, directly equating the divergence to the total dependency within the system relative to the independent state implicitly defined by $Q^{(\otimes k)}$.
    \item \textbf{Hierarchical Insight}: By further decomposing $C(P_k)$ into the sum of $r$-way interaction terms $\sum_{r=2}^k I^{(r)}(P_k)$, the framework allows investigating the complexity of these dependencies. As seen in Case 4, one can assess the relative contributions of pairwise ($I^{(2)}$), triplet ($I^{(3)}$), and higher-order interactions to the overall divergence. This reveals whether simple pairwise models suffice or if higher-order structure is crucial.
\end{itemize}

\textbf{Applications:}
The ability to precisely attribute divergence offers benefits across various fields:
\begin{itemize}
    \item \textbf{Machine Learning}: Beyond simple feature importance, this decomposition can guide model selection and architecture design. It quantifies the trade-off between fitting marginal distributions accurately versus capturing complex interactions (e.g., choosing between simpler additive models, models with pairwise interaction terms, or highly non-linear models like deep networks). It can also diagnose sources of error or divergence in generative models, variational autoencoders (in the ELBO objective), and reinforcement learning policy evaluation.
    \item \textbf{Complex Systems}: In neuroscience, the decomposition can differentiate contributions to neural coding divergence arising from changes in single-neuron firing rate statistics (marginals) versus changes in population synchrony, synergy, or higher-order correlations (interactions) \cite{Schneidman2006}. In genomics or systems biology, it can help distinguish the effects of individual gene dysregulation from altered pathway interactions or epistatic effects in disease models compared to healthy baselines.
    \item \textbf{Information Geometry}: The decomposition provides insight into the geometric structure of the space of probability distributions. It relates the KL divergence (a Bregman divergence) between a point $P_k$ and a reference product measure $Q^{(\otimes k)}$ to the total correlation (related to the divergence to the product of marginals $\prod P_i$) and the sum of marginal divergences. This enhances understanding of how dependencies shape the information manifold \cite{Amari2016}.
\end{itemize}

\textbf{Limitations and Future Directions:}
Several aspects warrant further investigation:

\paragraph{Extension: General Product References.}
While this paper focuses on reference distributions of the form $Q^{(\otimes k)}$ (identical marginals), the decomposition should readily generalise to arbitrary product references
\begin{equation*}
Q_{1:k}(x_1, \dots, x_k) := \prod_{i=1}^k Q_i(x_i),
\end{equation*}
where $Q_i$ may differ across dimensions. In this case, the decomposition becomes:
\begin{equation*}
\KL(P_k \| Q_{1:k}) = \sum_{i=1}^k \KL(P_i \| Q_i) + C(P_k),
\end{equation*}
with total correlation $C(P_k)$ defined as before. This variant may be particularly useful in applications where the baseline marginal distributions are heterogeneous.

\begin{itemize}
    \item \textbf{Continuous Variables}: The current derivation relies on Shannon entropy for discrete variables. Extending the framework rigorously to continuous systems requires careful treatment of differential entropies, reference measures (e.g., Lebesgue measure), and potential infinities, perhaps using tools like relative entropy, copula theory, or kernel density estimation. Extensions to continuous domains require non-trivial reinterpretations of entropy (differential entropy is not a Möbius-invertible set function).
    \item \textbf{Computational Complexity}: The number of subsets involved in calculating all interaction information terms grows exponentially ($\mathcal{O}(2^k)$) with the number of variables $k$. Exact computation is thus infeasible for very high-dimensional systems. Practical applications for large $k$ will likely require computationally efficient approximations, perhaps based on assuming sparse interactions (e.g., graphical models where only interactions corresponding to cliques are non-zero) or focusing only on lower-order interactions (e.g., truncating the sum at $r=2$ or $r=3$).
    \item \textbf{Estimation from Data}: Estimating high-order entropies and interaction information accurately from finite samples is statistically challenging due to the curse of dimensionality. Robust estimation techniques, potentially incorporating Bayesian methods with appropriate priors or regularisation, are needed for reliable application to empirical data \cite{Presse2013}.
\end{itemize}


Our decomposition utilises M\"obius inversion on the subset lattice, a tool also employed in the related work by Jansma et al.~\cite{Jansma2024} for decomposing KL divergence. However, the approaches and resulting frameworks differ significantly. Our work focuses specifically on decomposing $KL(P_k \| Q^{(\otimes k)})$ -- the divergence from an independent reference with identical marginals $Q$. We achieve this by separating the total divergence into two primary, interpretable components: the sum of first-order marginal KL divergences ($\sum_{i=1}^k KL(P_i \| Q)$), which isolates deviations of individual variables from the reference $Q$, and the total correlation ($C(P_k)$). We further show that $C(P_k)$, which captures all statistical dependencies within $P_k$, decomposes hierarchically into contributions from $r$-way interactions ($\sum_{r=2}^k I^{(r)}(P_k)$) based on the M\"obius inversion of Shannon entropy (Lemma~\ref{lem:TotalCorrelationII}). In contrast, the framework developed by Jansma et al.~\cite{Jansma2024} addresses the general multivariate $KL(p\|q)$ by applying M\"obius inversion \emph{directly} to the marginal KL divergences ($KL(p_S\|q_S)$) across all subsets $S$. This yields $2^k$ signed components ($\Delta_{p\|q}(S)$), each reflecting a contribution associated with a specific subset $S$. While their method offers a fine-grained decomposition applicable to general KL divergence based on the lattice structure of marginal KLs, our decomposition (Theorem~\ref{thm:KLDecomp}) provides a distinct structure centered on the interpretable separation between collective first-order marginal effects and the total, hierarchically structured, statistical dependency ($C(P_k)$). This comparison highlights the specific contribution of our approach in providing this particular additive structure using standard Shannon information quantities.

\section{Conclusion}
\label{sec:conclusion}

We have derived and empirically validated an algebraically exact and hierarchical decomposition of the Kullback-Leibler divergence $\KL(P_k \| Q^{(\otimes k)})$ into additive components consisting solely of the sum of marginal divergences and the total correlation. This decomposition is mathematically complete—accounting for the entire divergence with no residual—and interpretable, distinguishing marginal mismatches from the structural dependencies within the joint distribution. The total correlation term is further decomposed into higher-order interaction information terms via Möbius inversion, offering a layered view of dependency structure.

This decomposition deepens our theoretical understanding of KL divergence and its relationship to fundamental information-theoretic quantities like entropy and interaction information. More practically, it offers a powerful diagnostic tool for analysing complex systems across various domains, from machine learning to computational biology and neuroscience, by pinpointing whether divergence stems from individual component behaviour or systemic interactions.

Future research directions include developing computationally efficient approximations for high-dimensional systems, potentially leveraging techniques from graphical modelling or sparse interaction recovery; extending the framework rigorously to continuous variables, perhaps using copula-based methods or relative differential entropy; and creating software tools and visualisation techniques to facilitate the application of this decomposition in data analysis workflows across scientific domains.

\section*{Author's Note}
{\small
I am a second-year undergraduate in economics at the University of Bristol working independently. This paper belongs to a research programme I am developing across information theory, econometrics and mathematical statistics.

This work was produced primarily through AI systems that I directed and orchestrated. The AI generated the mathematical content, proofs and symbolic derivations based on my research questions and guidance. I have no formal mathematical training but am eager to learn through this process of directing AI-powered mathematical exploration.

My contribution involves designing research directions, evaluating and selecting AI outputs, and ensuring the coherence of the overall research agenda. All previously published work that influenced these results has been cited to the best of my knowledge. The presentation aims to be pedagogically accessible.
}


\appendix
\section{Numerical Validation Data Summary}
\label{app:validationsummary}

This appendix summarizes the key numerical values obtained during the validation process described in Section~\ref{sec:experiments}. Here, $\mathrm{KL}_{\mathrm{full}}$ is the directly computed $\KL(P_k \| Q^{(\otimes k)})$, $\mathrm{KL}_{\mathrm{marginals\_sum}}$ is $\sum \KL(P_i \| Q)$, $\mathrm{TotalCorrelation}_{C(P_k)}$ is the sum $\sum_{r=2}^k I^{(r)}$ from the decomposition, and $\mathrm{Direct}_{C(P_k)}$ is the total correlation calculated as $\sum H(X_i) - H(X_{[k]})$. The $\mathrm{Residual}$ compares $\mathrm{KL}_{\mathrm{full}}$ to the sum $\mathrm{KL}_{\mathrm{marginals\_sum}} + \mathrm{TotalCorrelation}_{C(P_k)}$. The term $I_{\mathrm{sums}}$ shows the values for each total $r$-way interaction $I^{(r)}$. All values are in bits.

The numerical data is presented below:

\begin{verbatim}
--- Database Summary ---
Case: Case1_k3_Symm
  KL_full: 0.24799690655495005
  KL_marginals_sum: 0.000000e+00
  TotalCorrelation_C_Pk: 0.24799690655495005  # Sum I^(r) for r>=2
  Direct_C_Pk: 0.24799690655495027            # Sum H(X_i) - H(X_{[k]})
  Residual (KL vs Decomp): 1.665335e-16
  I_sums: {'2': 0.18910321749875303, '3': 0.05889368905619702}

Case: Case2_k2_Asymm
  KL_full: 0.04643934467101547
  KL_marginals_sum: 0.000000e+00
  TotalCorrelation_C_Pk: 0.04643934467101547  # Sum I^(r) for r>=2 (only I^(2))
  Direct_C_Pk: 0.04643934467101547            # Sum H(X_i) - H(X_{[k]})
  Residual (KL vs Decomp): 6.938894e-17
  I_sums: {'2': 0.04643934467101547}

Case: Case4_k4_Symm
  KL_full: 0.11441198342591395
  KL_marginals_sum: 6.406853e-16              # Numerically non-zero but negligible
  TotalCorrelation_C_Pk: 0.1144119834259133   # Sum I^(r) for r>=2
  Direct_C_Pk: 0.11441198342591374            # Sum H(X_i) - H(X_{[k]})
  Residual (KL vs Decomp): 5.689893e-16       # Slightly higher due to KL_marginals_sum
  I_sums: {'2': 0.08863118375487466, '3': -0.02188937283553888,
           '4': 0.003891426907205896}
\end{verbatim}

Note: Minor differences between $\mathrm{TotalCorrelation}_{C(P_k)}$ and $\mathrm{Direct}_{C(P_k)}$ (around $10^{-16}$) arise from floating-point arithmetic differences in the calculation paths (summing interactions versus summing entropies). The $\mathrm{Residual}$ confirms that the main decomposition theorem holds.
\end{document}